\newcounter{MYtempeqncnt}
\newtheorem{theorem}{Theorem}
\newtheorem{corollary}[theorem]{Corollary}
\newtheorem{example}[theorem]{Example}
\newtheorem{claim}[theorem]{Claim}
\long\def\symbolfootnote[#1]#2{\begingroup
\def\thefootnote{\fnsymbol{footnote}}\footnote[#1]{#2}\endgroup}
\renewcommand{\paragraph}[1]{{\bf #1}}
\title{Accelerating Data Access for Single Node in Distributed Storage Systems via MDS Codes}
\author{Hao Shi$^\dagger$, Zhengyi Jiang$^\dagger$,  Zhongyi Huang$^\dagger$,
Linqi Song$^{\ast}$$^{\divideontimes}$, and Hanxu Hou$^\ddagger$$^{\ast}$$^{\star}$\\
$^\dagger$ Department of Mathematics Sciences, Tsinghua University, Beijing, China \\
$^\ddagger$ School of Computer Science and Technology, Dongguan University of Technology\\
$^{\ast}$ Department of Computer Science, City University of Hong Kong\\
$^{\divideontimes}$ City University of Hong Kong Shenzhen Research Institute
}
\begin{document}
\let\emph\textit
\maketitle
\pagestyle{empty}  
\thispagestyle{empty} 
\symbolfootnote[0]{$^\ddagger$:  Corresponding author. 
}

\begin{abstract}
    Maximum distance separable (MDS) array codes are widely employed in modern
distributed storage systems to provide high data reliability with small storage overhead. 
Compared with the data
access latency of the entire file, the data access latency of a single node in a distributed 
storage system is equally important. In this paper, we propose two algorithms to effectively 
reduce the data access latency on a single node in different scenarios for MDS codes. We show theoretically that 
our algorithms have an expected reduction ratio of $\frac{(n-k)(n-k+1)}{n(n+1)}$ and $\frac{n-k}{n}$ for 
the data access latency of a single node when it obeys uniform distribution and shifted-exponential distribution, 
respectively, where $n$ and $k$ are the numbers of all nodes and the number of data nodes respectively. In the 
worst-case analysis, we show that our algorithms have a reduction ratio of more than $60\%$ when $(n,k)=(3,2)$. 
Furthermore, in simulation experiments, we use the Monte Carlo simulation algorithm to demonstrate 
less data access latency compared with the baseline algorithm.
\end{abstract}

\begin{IEEEkeywords}
MDS property, data access latency, single node.
\end{IEEEkeywords}

\section{Introduction}
Maximum distance separable (MDS) array codes are widely employed in modern
distributed storage systems to provide high data reliability with small storage overhead. An $(n,k,m)$ MDS array code encodes $km$
{\em data symbols} into $nm$ {\em coded symbols} that are equally stored in $n$ nodes,
where each node stores $m$ symbols. We call the number of symbols stored in each node
as the sub-packetization level. The $(n,k,m)$ MDS array codes satisfy the {\em MDS property},
that is, any $k$ out of $n$ nodes can retrieve all $km$ data symbols. The codes are referred
to as {\em systematic codes} if the $km$ data symbols are included in the $nm$ coded symbols. Reed-Solomon (RS) codes \cite{reed1960} are typical MDS array codes
with $m = 1$. In this paper, we consider systematic
MDS array codes that contain $k$ {\em data nodes} which store the $km$ data symbols
and $r=n-k$ {\em parity nodes} which store the $rm$ parity symbols.

In modern distributed storage systems, there are indicators worth optimizing, such as repair bandwidth \cite{rashmi2011,tamo2013,hou2016,2017Explicit,li2018,2018A,hou2019a,hou2019b}, update bandwidth\cite{9514857,10437570,10619528}, 
I/O cost \cite{8849700, 8437865, 2024formula}, etc., but researchers often overlook an important indicator, which is the latency of data access from storage systems. 
In distributed storage systems, low latency is important for humans \cite{vulimiri2012more, ahmad2024efficient}. Even slightly higher web page load time can significantly 
reduce visits from users and revenue, as demonstrated by several sites \cite{2009web}. However, achieving consistent low latency is challenging. 
Modern applications are highly distributed \cite{alizadeh2010data, 2010web}, and likely to get more so as cloud computing separates users from their data and computation.

The simple replication method can slightly reduce the latency of data access, but it requires a huge amount of storage space \cite{acharya2024existence}. Given that total replication of files at storage nodes is impractical, several schemes based on partial replication have been proposed in the literature, where each storage node stores only a subset of the data (see \cite{cadambe2023brief} and references therein).
Some previous work \cite{huang2012codes, joshi2012coding, shah2014mds, lee2017mds, joshi2014delay} has used the MDS codes to study data access latency. They aim to divide a file into $k$ nodes for storage
and encode it into $n$ nodes. We only need to access the information in any $k$ nodes to get 
the entire file. This technology can effectively reduce the latency of accessing the entire file data. It uses knowledge such as task 
scheduling and queuing theory to achieve some good results. However, in real-world scenarios, we don’t need to access all the data in a file in most cases, 
but only a part of it. The current approach to this problem is to directly access the needed data of the node without any other conversion. 
The data access latency of each node is often different, depending on the physical distance and current status of the node. Once the access 
latency of the node we need is very high, this will cause the feedback latency of the service to become very high, affecting normal data usage.

 In this paper, we study the data access latency of a single data node. Based on the MDS codes, we propose two algorithms to reduce 
 the data access latency of a single data node in the low-load system. The two algorithms are suitable for two scenarios: the access latency of each node is known and unknown. 
 Note that the difference between our work and previous studies is that previous studies focus on the data access latency of the entire file, while we focus on the 
 access latency of a single data node. In actual scenarios, we do not need all the data of the file in most cases, but only a part of the data of the file. 
 The current industry approach to this problem is to directly access the data of the node without any other conversion.

 In the rest of the paper, we first propose two algorithms based on the MDS property of data storage to deal with two different scenarios to reduce the access latency of a single data node effectively. Then, we extend a more general Shifted-Exponential distribution based on the Shifted-Exponential distribution in \cite{lee2017speeding, liang2014tofec} to evaluate our algorithms. Finally, we use the Monte Carlo simulation algorithm to show that our algorithms have less data access latency than the baseline algorithm.

\section{Data Access via MDS Property}

In this section, we present a new data access method via the MDS property of 
coded data in the distributed storage systems.

Consider the $m\times n$ MDS code array such that: (i) the symbol $a_{i,j}$ in 
row $j=1,2,\ldots,m$ and column $i=1,2,\ldots,k$ is data symbol and the symbol 
$\{f_{i,j}(\boldsymbol{a}_1,\ldots,\boldsymbol{a}_m)\}$ in row $j=1,2,\ldots,m$ and column 
$k+i=k+1,k+2,\ldots,k+r$ is parity symbol, where $\boldsymbol{a}_j:=(a_{1,j},a_{2,j},
\ldots,a_{k,j})$ represents the $k$ data symbols;  
(ii) the symbol $a_{i,j}$ 
in row $j=1,2,\ldots,m$ is stored in node $i$ where $i = 1,2,\ldots, k$ and the symbol 
$\{f_{i,j}(\boldsymbol{a}_1,\ldots,\boldsymbol{a}_m)\}$ in row 
$j=1,2,\ldots,m$ is stored in node $k+i$, where $i = 1,2,\ldots, r$.

\begin{figure}[htbp]
    \centering
    \includegraphics[width=0.48\textwidth]{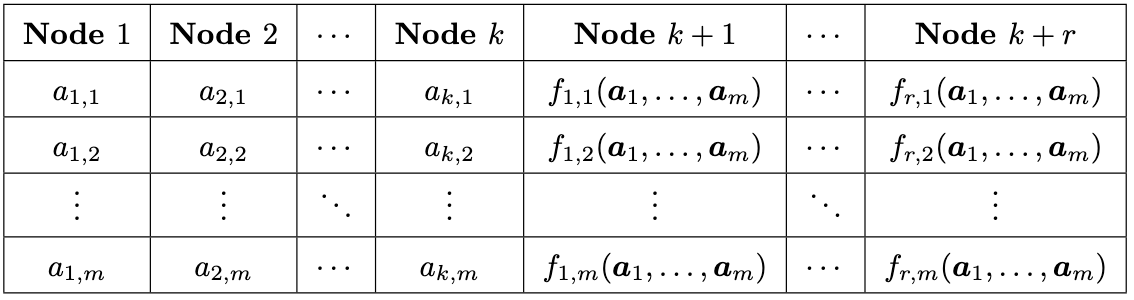}
    \caption{The structure of $(n, k)$ MDS systematic codes.}
    \label{fig.1.0}
\end{figure}


Suppose that we want to access the data stored in a certain data node, 
we can directly access the data symbols from that node 
and then get the data we need. 
In the rest of this paper, we call the above direct data Access method the Directly Access (DA) method.
However, if the latency of data access from that node happens to be very high, we will experience a potentially long period 
between making the request to extract the data and getting the data, which is inefficient.
For this reason, we propose our solution, which we will start with an example.

\begin{example}
    Suppose a set of data consists of four symbols $a_{1,1}, a_{1,2}, a_{2,1}, a_{2,2}$ and two 
parity symbols $a_{1,1}+ a_{2,1}, a_{1,2}+a_{2,2}$ consisting of $(3,2,2)$ MDS code, which are stored in 
nodes 1,2,3 respectively, and the latency of the data that we want to access for the three 
nodes is set to $x_1 = 50, x_2 = 100, x_3 = 50$ respectively. At a certain 
moment, we need to use $a_{2,1}, a_{2,2}$ two symbols immediately, directly access the data of node 
2, then we have a total latency of $x_2 = 100$; in addition, we can turn the idea to 
access the data of node 1 and node 3  $a_{1,1}, a_{1,2}, a_{1,1} + a_{2,1}, a_{1,2} + a_{2,2}$ four symbols, 
we can access the data in node 1 and node 3 in parallel, so we only 
need a total latency of $\max(x_1, x_3) = 50$, and then do some simple calculations
\begin{eqnarray*}
a_{1,1} \oplus (a_{1,1}\oplus a_{2,1}) = a_{2,1} \\
a_{1,2} \oplus (a_{1,2}\oplus a_{2,2}) = a_{2,2}
\end{eqnarray*}
This gives us the data $a_{2,1},a_{2,2}$ and we can reduce the access latency of $\frac{100-50}{100}*100\% = 50 \%$.
\end{example}

\begin{figure}[htbp]
    \centering
    \includegraphics[width=0.35\textwidth]{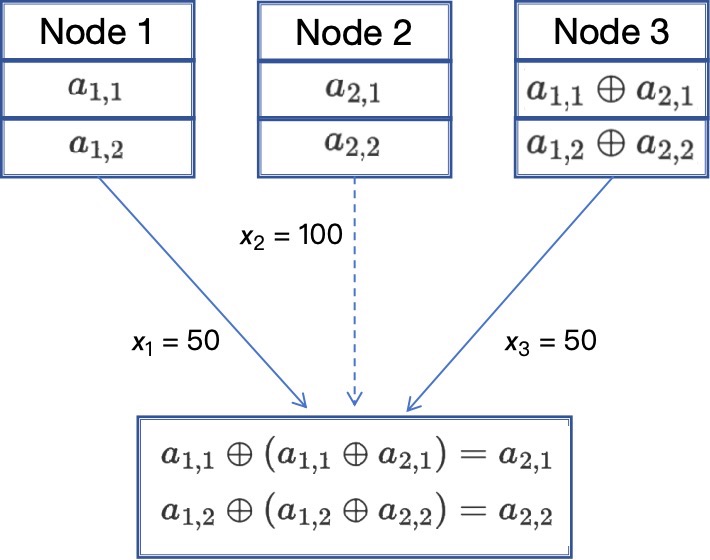}
    \caption{\textbf{An example for data access latency in (3, 2, 2) MDS codes.} When we want to get the data stored in node 2, we have two choices: i) Directly access data in node 2 (the dotted line in the figure); ii) If the access latency of node 2 is larger than that of other two nodes, we can access data stored in node 1 and node 3 and then through some calculations we can get the data stored in node 2 (the solid line in the figure).}
    \label{fig.2.0}
\end{figure}

Now, we discuss the new idea to reduce the data access latency through the MDS 
property of the code in the general case.
For a general distributed storage system containing $n$ nodes, which are stored in the 
$(n, k, m)$ MDS array code shown in Fig. 1. 
For $i= 1,2, \ldots, n$, assume that the access latency of the data of node $i$ is $x_i$.

Suppose that we need to access the data in data node 
$t$ (we also call this a data access request $Q_t$), where $1 \leq t \leq k$. 
On the one hand, we can directly access the data from node $t$, which is the most conventional idea. In this way, the access latency is $x_t$. On the other hand, note that the original data can be recovered from all the symbols of any $k$ nodes because of the $(n,k)$ MDS property. Therefore, if we 
access the symbols from $k$ nodes $\{j_v\}_{v=1}^{k}$ in parallel, we can get 
all the data through some calculations, which of course includes the data of node $t$, and the data access latency of this 
way is $\max(\{x_{j_v}\}_{v=1}^{k})$. Of course, we should choose $k$ nodes to make $\max(\{x_{j_v}\}_{v=1}^{k})$ as small as possible.

If we can know the relative scale of data access latency for each node in advance, for example,
$x_{j_1}\leq x_{j_2} \leq \cdots \leq x_{j_n}$, in which $(j_1, j_2, \ldots, j_n)$ is a permutation of $\{1,2,\ldots,n\}$,
we only consider the data access latency of the nodes in the set $\{x_{j_s}\}_{s=1}^k$. 
If node $t$ happens to be in $\{x_{j_s}\}_{s=1}^k$, we can directly access the data in node $t$, otherwise, we only need $x_{j_k}$ for the latency of accessing all the symbols from $k$ nodes $\{j_v\}_{v=1}^{k}$, 
and we can get the $m$ data symbols in node $t$ by some further calculations. 
The algorithm can be found in the pseudo-code Alg. \ref{alg:1}.

\begin{algorithm}
    \caption{Accelerated Access with Known Latency Algorithm}
    \begin{algorithmic}[1]
    \REQUIRE Data access request $Q_t$, set of $k$ nodes with the lowest latency $D$;
    \ENSURE Data symbols to be accessed $a_{t,1}, a_{t,2}, \ldots, a_{t,m}$;
    
    \STATE Receive data access request $Q_t$;
    \IF{$t \in D$}
        \STATE Server accesses the distributed storage system to access data from node $t$;
        \RETURN Data from node $t$.
    \ENDIF
    \STATE Server accesses the distributed storage system to access all data symbols concurrently (denoted by $A=\{a_{i,j}\}_{i\in D,j=1,2,\ldots,m}$) from $k$ nodes in $D$;
    \RETURN Compute ($A$, $t$).
    \end{algorithmic}
    \label{alg:1}
\end{algorithm}

\begin{algorithm}
    \caption{Compute $(A,t)$}
    \begin{algorithmic}[1]
    \REQUIRE Data symbols set $A$ accessed from $k$ nodes by the server, data node $t$ needed;
    \ENSURE Data symbols in node $t$;
    
    \STATE List equations for the $k$ nodes based on the encoding algorithm;
    \STATE Retrieve the data symbols $a_{t,1}, a_{t,2}, \ldots, a_{t,m}$ from node $t$ by $(n,k)$ MDS property;
    \RETURN $a_{t,1}, a_{t,2}, \ldots, a_{t,m}$.
    \end{algorithmic}
    \label{alg:2}
\end{algorithm}

In most cases, the relative scale of data access latency for each node cannot be known in advance, we can still use a similar idea. 
For a data access request $Q_t$, where $t\in\{1,2,\ldots,k\}$. In the first stage, we simultaneously start the access process of the symbols of all $n$ nodes. Note that the access latency of each node may be different, once we access all $mk$ symbols of certain $k$ nodes, 
we stop the access process of the remaining nodes.
In the second stage, we can further get the needed data symbols in node $t$ based on all the $mk$ symbols that have been accessed via the $(n,k)$ MDS property. 
It is worth noting that, all $m$ symbols of node $t$ may have been accessed in the first stage.
Once this situation is detected, 
we can terminate the subsequent process and extract the data symbols of node $t$ directly to obtain the required data without the second stage. The specific algorithm can be found in the pseudo-code Alg. \ref{alg:3}.

\begin{algorithm}
    \caption{Accelerated Access with Unknown Latency Algorithm}
    \begin{algorithmic}[1]
    \REQUIRE Data access request $Q_t$;
    \ENSURE Data to be accessed $a_{t,1}, a_{t,2}, \ldots, a_{t,m}$;
    
    \STATE Receive data access request $Q_t$;
    \STATE Set the number of nodes that have finished access to $flag = 0$;
    \STATE Server accesses the distributed storage system to concurrently access data from $n$ nodes;
    
    \WHILE{a node finishes accessing}
        \IF{the data is from node $t$}
            \RETURN Data from node $t$.
        \ELSE
            \STATE $flag += 1$;
        \ENDIF
        \IF{$flag == k$}
            \RETURN Compute($A$, $t$).
        \ENDIF
    \ENDWHILE
    \end{algorithmic}
    \label{alg:3}
\end{algorithm}

In this paper, we denote the two proposed MDS property-based accelerated data access algorithms as Accelerated Access with 
Known Latency (AAKL) (Alg.~\ref{alg:1}) and Accelerated Access with UnKnown Latency (AAUL) Algorithm (Alg.~\ref{alg:3}). Compared with the AAUL algorithm, the AAKL algorithm knows the specific distribution of data download delays of each node in advance, so it can accurately determine which nodes to download data from. It does not need to download data from all nodes, and is more efficient in terms of resource usage.

\section{Theoretical Analysis of Two Algorithms}
In this section, we theoretically show that our AAKL and AAUL algorithms have less latency than that of the DA method.

For generality, assume that the data access latency of $n$ nodes denoted by $X_1, X_2,\ldots, X_n$ that are $n$ 
independent and identically distributed random variables, all obeying a given distribution. 
When we need to access data from data node $t$, where $1\leq t \leq k$, the data access latency of the DA algorithm is $X_t$, and it is easy to find out that regardless of knowing the data access latency for each node, the data access latency of our AAKL and AAUL algorithms are the same, i.e., 
$\min(X_t, \max(mink(\{X_i\}_{i \neq t})))$, where $mink(S)$ denotes the subset of the smallest $k$ elements in $S$ for any set $S$.
Note that 
$$X_t \geq \min(X_t, \max(mink(\{X_i\}_{i\neq t}))).$$
Then we have the following claim without proof.

\begin{claim}
    The data access latency of AAKL or AAUL algorithm is no more than that of the DA algorithm. 
\end{claim}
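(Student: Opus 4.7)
The plan is to simply unpack the two latency expressions already written in the paragraph preceding the claim and invoke the trivial inequality $\min(a,b)\le a$. Concretely, the latency of the DA method when accessing node $t$ is $X_t$, while the paper has already argued that the latency of both AAKL and AAUL equals
\[
L := \min\!\bigl(X_t,\ \max(\mathrm{mink}(\{X_i\}_{i\ne t}))\bigr).
\]
So the statement to prove reduces to $L \le X_t$, which is immediate from the definition of $\min$.

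The only thing worth spelling out for the reader is \emph{why} the common expression $L$ correctly describes the latency of both AAKL and AAUL. For AAKL, I would observe that if $t$ belongs to the set $D$ of the $k$ lowest-latency nodes, then the algorithm reads node $t$ directly with cost $X_t$; otherwise, it reads $k$ nodes whose maximum latency is precisely $\max(\mathrm{mink}(\{X_i\}_{i\ne t}))$, and this happens exactly when the latter is smaller than $X_t$. Taking the cases together yields $L$. For AAUL, I would note that the first stage finishes as soon as either (i) node $t$ itself returns, which happens at time $X_t$, or (ii) some $k$ of the other nodes have returned, which happens at time $\max(\mathrm{mink}(\{X_i\}_{i\ne t}))$; the algorithm stops at the earlier of these two events, and by Claim/Algorithm~\ref{alg:2} the required symbols can be reconstructed immediately thanks to the $(n,k)$ MDS property, so again the latency is $L$.

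Having established both algorithms realize the latency $L$, the conclusion $L\le X_t$ follows in one line. There is essentially no obstacle here; the claim is stated as a sanity check that our two algorithms never do worse than the baseline. The only mild subtlety is to be careful that the ``stop as soon as $k$ nodes have returned'' rule in AAUL is correctly modeled by $\max(\mathrm{mink}(\{X_i\}_{i\ne t}))$ rather than $\max(\mathrm{mink}(\{X_i\}_{i=1}^n))$, because once node $t$ itself returns the algorithm terminates via the first branch with cost $X_t$; the $\min$ outside captures precisely this early-termination behavior, which is why the two cases fuse into a single expression.
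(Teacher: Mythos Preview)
Your proposal is correct and matches the paper's approach exactly: the paper simply records the inequality $X_t \ge \min\bigl(X_t,\max(\mathrm{mink}(\{X_i\}_{i\ne t}))\bigr)$ in the line preceding the claim and then states the claim ``without proof.'' Your additional paragraph explaining why both AAKL and AAUL realize the common latency $L$ just makes explicit what the paper sketched in prose beforehand, so there is no substantive difference.
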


Let $Y_{t,1} = X_t, Y_{t,2} = \min(X_t, \max(mink(\{X_i\}_{i\neq t})))$, for each $t\in\{1,2,\ldots,n\}$. For any random variable $X$, we denote $f_X(\cdot)$ as the probability 
density function (PDF) of $X$ and $F_X(\cdot)$ as the cumulative probability distribution function 
(CDF) of $X$. In order to better analyze the performance 
of the AAKL and AAUL algorithms theoretically, the following theorem gives the CDF and PDF of $Y_{t,1},Y_{t,2}$.

\begin{theorem}
    For any $t\in\{1,2,\ldots,n\}$, denote that the CDF and PDF of the random variable $\{X_i\}_{i\neq t}$ are $F_{X_{t,0}}(y), f_{X_{t,0}}(y)$, and the CDF and PDF of the 
    random variable $X_t$ are $F_{X_t}(y), f_{X_t}(y)$, we can obtain the CDF and PDF of $Y_{t,1}$ and $Y_{t,2}$ as Eq. \eqref{eq:6789}.
\end{theorem}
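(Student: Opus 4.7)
The plan is to handle the two random variables separately. For $Y_{t,1}=X_t$ there is essentially nothing to prove: by definition its CDF and PDF are exactly $F_{X_t}$ and $f_{X_t}$. All the work goes into $Y_{t,2}=\min\bigl(X_t,\max(\mathrm{mink}(\{X_i\}_{i\neq t}))\bigr)$, which I would decompose into (i) the distribution of the inner order statistic $Z:=\max(\mathrm{mink}(\{X_i\}_{i\neq t}))$, and then (ii) the distribution of $\min(X_t,Z)$ using independence between $X_t$ and the remaining $X_i$.

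For step (i), I would observe that $Z$ is just the $k$-th smallest of the $n-1$ i.i.d.\ random variables $\{X_i\}_{i\neq t}$, each having CDF $F_{X_{t,0}}$ and PDF $f_{X_{t,0}}$. The classical order statistic formula then gives
\begin{equation*}
F_{Z}(y)=\sum_{j=k}^{n-1}\binom{n-1}{j}F_{X_{t,0}}(y)^{j}\bigl(1-F_{X_{t,0}}(y)\bigr)^{n-1-j},
\end{equation*}
with corresponding density
\begin{equation*}
f_{Z}(y)=\frac{(n-1)!}{(k-1)!\,(n-1-k)!}\,F_{X_{t,0}}(y)^{k-1}\bigl(1-F_{X_{t,0}}(y)\bigr)^{n-1-k}f_{X_{t,0}}(y),
\end{equation*}
which I would either cite as a standard fact or derive in one line by a combinatorial argument on which $j$ of the $n-1$ variables fall below $y$.

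For step (ii), since the $X_i$ are mutually independent, $X_t$ is independent of $Z$. Therefore the tail factorises, $P(Y_{t,2}>y)=P(X_t>y)\,P(Z>y)$, and hence
\begin{equation*}
F_{Y_{t,2}}(y)=1-\bigl(1-F_{X_t}(y)\bigr)\bigl(1-F_{Z}(y)\bigr),
\end{equation*}
from which the PDF follows by one application of the product rule,
\begin{equation*}
f_{Y_{t,2}}(y)=f_{X_t}(y)\bigl(1-F_{Z}(y)\bigr)+\bigl(1-F_{X_t}(y)\bigr)f_{Z}(y).
\end{equation*}
Substituting the expressions for $F_Z$ and $f_Z$ from step (i) yields the claimed Eq.~\eqref{eq:6789}.

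There is no real obstacle here; the result is a direct application of two textbook facts (order statistics of i.i.d.\ samples, and the distribution of a minimum of two independent variables). The only point where one has to be a bit careful is the independence assertion in step (ii): because the index $t$ is fixed and $\{X_i\}_{i\neq t}$ involves only the other $n-1$ variables, the joint independence of $X_1,\ldots,X_n$ does give independence of $X_t$ and $Z$, and this is what legitimises the factorisation of the tail. Once that is stated, the rest is routine.
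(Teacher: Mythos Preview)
Your proposal is correct and follows essentially the same route as the paper: identify $Z=\max(\mathrm{mink}(\{X_i\}_{i\neq t}))$ as the $k$-th order statistic of $n-1$ i.i.d.\ samples, compute its CDF via the binomial-tail sum, and then factor the survival function of $\min(X_t,Z)$ using the independence of $X_t$ from the remaining variables. The only cosmetic difference is that you write $f_Z$ in the compact Beta-type form $\tfrac{(n-1)!}{(k-1)!\,(n-1-k)!}F_{X_{t,0}}^{k-1}(1-F_{X_{t,0}})^{n-1-k}f_{X_{t,0}}$, whereas the paper obtains $f_{Y_{t,2}}$ by differentiating the binomial sum for $F_Z$ term by term; the two expressions are equivalent.
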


\begin{figure*}[!t]
    \normalsize
    \setcounter{MYtempeqncnt}{\value{equation}}
    \setcounter{equation}{5}
    \hrulefill
    \begin{eqnarray}\label{eq:6789}
        F_{Y_{t,1}}(y) &=& F_{X_t}(y) \\\nonumber
        f_{Y_{t,1}}(y) &=& f_{X_t}(y) \\\nonumber
        F_{Y_{t,2}}(y) &=& F_{X_t}(y) + \Big(1 - F_{X_t}(y)\Big) \cdot \sum_{i=k}^{n-1}\binom{n-1}{i}F_{X_{t,0}}(y)^i(1-F_{X_{t,0}}(y))^{n-1-i} \\\nonumber
        f_{Y_{t,2}}(y) &=& f_{X_t}(y) - f_{X_t}(y)\cdot \sum_{i=k}^{n-1}\binom{n-1}{i}F_{X_{t,0}}(y)^i(1-F_{X_{t,0}}(y))^{n-1-i} + \Big(1 - F_{X_t}(y)\Big)\\\nonumber
        &&\cdot \sum_{i=k}^{n-1}\Big[C^{i}_{n-1}if_X(y)F_{X_{t,0}}(y)^{i-1}(1-F_{X_{t,0}}(y))^{n-1-i} - C^{i}_{n-1}(n-1-i)f_X(y)F_{X_{t,0}}(y)^{i}(1-F_{X_{t,0}}(y))^{n-2-i}\Big].
    \end{eqnarray}
    \setcounter{equation}{\value{MYtempeqncnt}}

    \vspace*{4pt}
\end{figure*}

\begin{proof}
    We can easily get that $F_{Y_{t,1}}(y) = F_{X_t}(y)$ and $f_{Y_{t,1}}(y) = f_{X_t}(y)$ because of $Y_{t,1} = X_t$.

    We can derive the explicit form of $F_{Y_{t,2}}(y)$ as follows.
    \begin{eqnarray*}
        &&F_{Y_{t,2}}(y)\\ &=&  P(Y_{t,2} \leq y)\\ 
        &=&  P(\min(X_t, \max(mink(\{X_i\}_{i\neq t})) \leq y)\\
        &=& P(X_t \leq y \vee \max(mink(\{X_i\}_{i\neq t})) \leq y) \\
        &=& 1 - P(X_t > y ) \cdot P(\max(mink(\{X_i\}_{i\neq t})) > y) \\ 
        &=& 1- (1 - F_{X_t}(y))\cdot(1-F_{\max(mink(\{X_i\}_{i\neq t}))}(y)).
    \end{eqnarray*}
    Notice that
    \begin{eqnarray*}
        &&F_{\max(mink(\{X_i\}_{i\neq t}))}(y)\\ &=& P(\max(mink(\{X_i\}_{i\neq t})) \leq y)\\ 
        &=& P(mink(\{X_i\}_{i\neq t}) \leq y) \\ 
        &=& \sum_{i=k}^{n-1}\binom{n-1}{i}P(X \leq y) ^ {i}(1-P(X\leq y))^{n-1-i} \\ 
        &=& \sum_{i=k}^{n-1}\binom{n-1}{i}F_{X_{t,0}}(y)^i(1-F_{X_{t,0}}(y))^{n-1-i}.
    \end{eqnarray*}
    So we can get that
    \begin{eqnarray*}
        &&F_{Y_{t,2}}(y) \\&=& 1- (1 - F_{X_t}(y))\cdot(1-F_{\max(mink(\{X_i\}_{i\neq t}))}(y)) \\
        &=& 1- (1 - F_{X_t}(y))\cdot\\&&(1- \sum_{i=k}^{n-1}\binom{n-1}{i}F_{X_{t,0}}(y)^i(1-F_{X_{t,0}}(y))^{n-1-i}) \\
        &=& F_{X_t}(y) + \sum_{i=k}^{n-1}\binom{n-1}{i}F_{X_{t,0}}(y)^i(1-F_{X_{t,0}}(y))^{n-1-i} \\&&- \sum_{i=k}^{n-1}C^{i}_{n-1}F_{X_{t,0}}(y)^i(1-F_{X_{t,0}}(y))^{n-1-i}\cdot F_{X_t}(y) \\
        &=&F_{X_t}(y) + \Big(1 - F_{X_t}(y)\Big) \cdot \sum_{i=k}^{n-1}\binom{n-1}{i}\\&&F_{X_{t,0}}(y)^i(1-F_{X_{t,0}}(y))^{n-1-i}.
    \end{eqnarray*}
\end{proof}

\subsection{Expectation Case Analysis}

For $t\in\{1,2,\ldots,n\}$, we denote the mathematical expectation of the random variables 
$Y_{t,1}$ and $Y_{t,2}$ as $E(Y_{t,1})$ and $E(Y_{t,2})$, respectively.

\begin{corollary}\label{lemma1}
    If the random variables $\{X_i\}_{i=1}^n$ obey the uniform distribution $(X_i\sim U[0, T], i =1,2,\ldots,n),$ 
    for any $(n,k,m)$ MDS array code, the expectation of the data access latency
    for AAKL (or AAUL) algorithm is 
    $$E[Y_{t,2}] = \frac{T}{2} - \frac{(n-k)(n-k+1)T}{2n(n+1)}.$$
\end{corollary}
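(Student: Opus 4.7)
The plan is to evaluate the expectation directly via $E[Y_{t,2}] = \int_0^{T} (1 - F_{Y_{t,2}}(y))\,dy$, using the CDF formula from the preceding Theorem, and then simplify the resulting sum into the claimed closed form.

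First I would specialize the Theorem to the uniform case. Since all $X_i$ are i.i.d.\ $U[0,T]$, we have $F_{X_t}(y) = F_{X_{t,0}}(y) = y/T$ on $[0,T]$. Writing $p := y/T$ and using the complementary-binomial identity $\sum_{i=k}^{n-1}\binom{n-1}{i}p^i(1-p)^{n-1-i} = 1 - \sum_{i=0}^{k-1}\binom{n-1}{i}p^i(1-p)^{n-1-i}$, the expression for $1 - F_{Y_{t,2}}(y)$ collapses to
\begin{equation*}
1 - F_{Y_{t,2}}(y) = (1-p)\sum_{i=0}^{k-1}\binom{n-1}{i}p^i(1-p)^{n-1-i}.
\end{equation*}

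Next I would integrate. Substituting $u = y/T$ gives $E[Y_{t,2}] = T\sum_{i=0}^{k-1}\binom{n-1}{i}\int_0^1 u^i(1-u)^{n-i}\,du$. Each integral is the Beta function $B(i+1,n-i+1) = \frac{i!(n-i)!}{(n+1)!}$, and combining with $\binom{n-1}{i}$ telescopes the factorials into the simple ratio $(n-i)/[n(n+1)]$. Thus
\begin{equation*}
E[Y_{t,2}] = \frac{T}{n(n+1)}\sum_{i=0}^{k-1}(n-i) = \frac{T\,k(2n-k+1)}{2n(n+1)}.
\end{equation*}

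Finally I would verify that this matches the target form by checking the identity $n(n+1) - (n-k)(n-k+1) = k(2n-k+1)$, which follows from the difference-of-squares $n^2 - (n-k)^2 = k(2n-k)$ plus the linear term $n-(n-k) = k$. Rearranging then yields the claimed expression $E[Y_{t,2}] = T/2 - (n-k)(n-k+1)T/[2n(n+1)]$.

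The only step that is not purely mechanical is the telescoping of $\binom{n-1}{i}\,B(i+1,n-i+1)$ into $(n-i)/[n(n+1)]$; everything else is either the direct application of the earlier Theorem or elementary algebra. I do not anticipate a real obstacle, but I would be careful to keep the roles of $F_{X_t}$ and $F_{X_{t,0}}$ straight (they coincide here only because the underlying distributions are identical), so the argument cleanly reuses the Theorem rather than re-deriving it.
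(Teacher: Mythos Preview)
Your proposal is correct and somewhat cleaner than the paper's own argument. The paper substitutes the uniform CDF/PDF into the general PDF formula for $Y_{t,2}$ from the Theorem and then evaluates $E[Y_{t,2}]=\int y\,f_{Y_{t,2}}(y)\,dy$ directly, omitting the integration and simplification details. You instead use the survival-function identity $E[Y_{t,2}]=\int_0^T(1-F_{Y_{t,2}}(y))\,dy$, flip the binomial tail to $\sum_{i=0}^{k-1}$, and reduce each term to a Beta integral whose factorials cancel against $\binom{n-1}{i}$ to give the simple summand $(n-i)/[n(n+1)]$. Both routes rely on the same CDF formula from the preceding Theorem, but yours avoids differentiating and makes the closed-form emerge transparently from $\sum_{i=0}^{k-1}(n-i)=k(2n-k+1)/2$; the paper's PDF-based computation reaches the same endpoint but with more bookkeeping left implicit.
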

\begin{proof}
Recall that the random variable $\{X_i\}_{i=1}^n$ obeys the uniform distribution $(X_i\sim U[0, T], i =1,2,\ldots,n)$, 
    we can know that
    \begin{eqnarray*}
        &&F_{X_{t,0}}(y) = F_{X_t}(y) = \frac{y}{T}, \qquad y \in [0, T], \\
        &&f_{X_{t,0}}(y) = f_{X_t}(y) = \frac{1}{T}, \qquad y \in [0, T]. 
    \end{eqnarray*}
Together the two equations above and Eq. \eqref{eq:6789}, we get
    \begin{eqnarray*}
        &&f_{Y_{t,2}}(y) = \frac{1}{T} + \sum_{i=k}^{n-1} \binom{n-1}{i} ( \frac{i y^{i-1}}{T^i} \left(1 - \frac{y}{T}\right)^{n-1-i} \\
        &&- \frac{(n-1-i) y^i}{T^{i+1}} \left(1 - \frac{y}{T}\right)^{n-2-i} ) \\
        &&- \sum_{i=k}^{n-1} \binom{n-1}{i} ( \frac{(i+1) y^i}{T^{i+1}} \left(1 - \frac{y}{T}\right)^{n-1-i}\\
        && - \frac{(n-1-i) y^{i+1}}{T^{i+2}} \left(1 - \frac{y}{T}\right)^{n-2-i} ), \qquad y \in [0, T].
    \end{eqnarray*}
    Then we can get
    \begin{align*}
        E[Y_{t,2}] &= \int_{-\infty}^{+\infty}yf_{Y_{t,2}}(y)dy = \frac{T}{2} - \frac{(n-k)(n-k+1)T}{2n(n+1)}.
    \end{align*}

\end{proof}
\begin{corollary}\label{coro5}
    If the random variables $\{X_i\}_{i=1}^n$ obey the uniform distribution $(X_i\sim U[0, T], i =1,2,\ldots,n)$, our AAKL (or AAUL) algorithm can reduce 
    the expectation of the data access latency compared to the DA algorithm by the rate of
    $$\Gamma_{U} = \frac{(n-k)(n-k+1)}{n(n+1)}.$$
\end{corollary}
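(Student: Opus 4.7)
The plan is to derive this corollary as an immediate consequence of Corollary~\ref{lemma1}, treating it essentially as a one-line ratio computation rather than as a new analytic result. The key observation is that Corollary~\ref{lemma1} already supplies the expected latency $E[Y_{t,2}]$ of AAKL/AAUL under the uniform model, and the only additional ingredient needed is the expected latency $E[Y_{t,1}]$ of the DA algorithm under the same model. Since $Y_{t,1} = X_t$ and $X_t \sim U[0,T]$, I would simply invoke $E[Y_{t,1}] = T/2$ with no further computation.

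With both expectations in hand, I would define the reduction rate as
\begin{equation*}
\Gamma_U \;=\; \frac{E[Y_{t,1}] - E[Y_{t,2}]}{E[Y_{t,1}]},
\end{equation*}
which is the natural formalization of the ``reduction ratio'' phrasing in the statement. Substituting the closed form from Corollary~\ref{lemma1},
\begin{equation*}
E[Y_{t,1}] - E[Y_{t,2}] \;=\; \frac{T}{2} - \left(\frac{T}{2} - \frac{(n-k)(n-k+1)T}{2n(n+1)}\right) \;=\; \frac{(n-k)(n-k+1)T}{2n(n+1)},
\end{equation*}
and dividing by $E[Y_{t,1}] = T/2$ cancels the $T$ and the factor $1/2$, yielding exactly $\Gamma_U = \frac{(n-k)(n-k+1)}{n(n+1)}$.

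There is essentially no obstacle here. The only thing worth flagging is that the statement is independent of both $T$ (the scale parameter drops out, as expected for a ratio under a scale-invariant family) and the choice of target index $t \in \{1,\dots,k\}$ (by symmetry of the i.i.d.\ assumption on the $X_i$), so I would briefly remark on these two invariances to justify that $\Gamma_U$ is a well-defined single number rather than a quantity depending on $t$ or $T$. Everything else reduces to arithmetic already carried out in Corollary~\ref{lemma1}.
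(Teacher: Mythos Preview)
Your proposal is correct and matches the paper's own proof essentially line for line: the paper also invokes Corollary~\ref{lemma1} to obtain $E[Y_{t,1}]=T/2$ and $E[Y_{t,2}]=\frac{T}{2}-\frac{(n-k)(n-k+1)T}{2n(n+1)}$, then computes $\Gamma_U=(E[Y_{t,1}]-E[Y_{t,2}])/E[Y_{t,1}]$ directly. Your added remarks on independence from $T$ and $t$ are nice but go slightly beyond what the paper writes.
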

\begin{proof}
    From Corollary \ref{lemma1}, we have $E[Y_{t,1}] = \frac{T}{2}, E[Y_{t,2}] = \frac{T}{2} - \frac{(n-k)(n-k+1)T}{2n(n+1)}$. Then we can obtain that 
    \begin{eqnarray*}
        \Gamma_{U} &=& \frac{E[Y_{t,1}] - E[Y_{t,2}]}{E[Y_{t,1}]}
        = \frac{(n-k)(n-k+1)}{n(n+1)}.
    \end{eqnarray*}

\end{proof}

Denote the code rate of $(n,k)$ MDS code as $c = \frac{k}{n}$. From Corollary \ref{coro5}, we can get
$
    \Gamma_{U} = \frac{(n-k)(n-k+1)}{n(n+1)} = \frac{(1-c)(1-c+\frac{1}{n})}{(1+\frac{1}{n})}.
$
If $n \to +\infty$, we have $\Gamma_{U} = (1-c)^2$, $\Gamma_{U}$ is only related to the code rate of the MDS code. This indicates that when the code rate $c$ is smaller, $\Gamma_U$ is larger, then the benefit of using AAKL (AUKL) compared to the AD algorithm is greater, and it is a quadratic growth relationship.

In real distributed storage systems, the data access latency of each node is often not uniformly distributed, but has an 
initial threshold $s$. The data access latency of each node must be greater than $s$, and it shows an exponential distribution 
trend starting from $s$. Therefore, we consider a new distribution based on the exponential distribution, Shifted-Exponential 
distribution (Shifted-Exp($\lambda, s$)). It is an exponential distribution shifted to the right by $s$ units, which is more 
consistent with the distribution of data access latency of each node in practical scenarios. It is the closest distribution to the actual data download latency distribution of each node \cite{lee2017speeding}. It shows that each node has a minimum threshold $s$ for data download latency, and each different node will have a slightly different response time due to different loads and communication conditions. In particular, when $s=0$, it 
degenerates into an exponential distribution; when $s=1$, it is the distribution shown in \cite{lee2017speeding}. Next, we analyze the data access latency  
of our algorithm based on Shifted-Exponential distribution.

\begin{corollary}\label{lemma2}
    If the random variable $\{X_i\}_{i=1}^n$ obeys the Shifted-Exponential distribution $(X_i\sim $ Shifted-Exp$(\lambda, s ), i =1,2,\ldots,n),$ 
    for any $(n,k,m)$ MDS array code, the expectation of the data access latency
    for AAKL (or AAUL) algorithm is 
    $$E[Y_{t,2}] = s+\frac{k}{\lambda n}.$$
\end{corollary}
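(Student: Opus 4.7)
The plan is to compute $E[Y_{t,2}]$ via the tail integral
\[
E[Y_{t,2}] = \int_0^{\infty} P(Y_{t,2} > y)\,dy,
\]
which avoids carrying around the unwieldy density in Eq.~\eqref{eq:6789}. Since the Shifted-Exp$(\lambda,s)$ distribution is supported on $[s,\infty)$, we have $Y_{t,2}\geq s$ almost surely, so $P(Y_{t,2} > y) = 1$ for $y < s$ and the integral reduces to $s + \int_s^{\infty} P(Y_{t,2} > y)\,dy$. Because $X_t$ is independent of $\{X_i\}_{i\neq t}$, the tail factors as
\[
P(Y_{t,2} > y) = P(X_t > y)\cdot P\bigl(\max(mink(\{X_i\}_{i\neq t})) > y\bigr),
\]
and the second factor is the probability that strictly fewer than $k$ of the $n-1$ other variables lie below $y$, i.e.\ $\sum_{i=0}^{k-1}\binom{n-1}{i}F_{X_t}(y)^i(1-F_{X_t}(y))^{n-1-i}$.

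Next, I would plug in the Shifted-Exp CDF $F_{X_t}(y) = 1-e^{-\lambda(y-s)}$ and make the substitution $v=e^{-\lambda(y-s)}$, which sends $y\in[s,\infty)$ to $v\in(0,1]$ with $dy = -\frac{dv}{\lambda v}$. Each summand of $\int_s^{\infty}P(Y_{t,2}>y)\,dy$ then turns into a Beta integral
\[
\frac{1}{\lambda}\binom{n-1}{i}\int_0^1 (1-v)^i v^{n-i-1}\,dv = \frac{1}{\lambda}\binom{n-1}{i}\cdot\frac{i!\,(n-i-1)!}{n!} = \frac{1}{\lambda n}.
\]
Summing this constant $k$ times (for $i=0,1,\ldots,k-1$) gives $\frac{k}{\lambda n}$, whence $E[Y_{t,2}] = s + \frac{k}{\lambda n}$, as claimed.

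An equivalent route (useful as a sanity check) is to shift $Z_i = X_i - s \sim \mathrm{Exp}(\lambda)$ iid, condition on the rank $j$ of $Z_t$ among $Z_1,\ldots,Z_n$, and notice that $\min(Z_t,Z_{(k),\,n-1}) = Z_{(j)}$ for $j\leq k$ and $= Z_{(k)}$ for $j>k$; by symmetry $P(\text{rank}(Z_t)=j) = 1/n$, and with the standard identity $E[Z_{(j)}] = \frac{1}{\lambda}\sum_{i=1}^{j}\frac{1}{n-i+1}$ the two sums telescope to $\frac{k}{\lambda n}$. The main obstacle in either route is essentially bookkeeping: one must keep the index shifts in the binomial sum straight (the tail uses $i<k$ while the CDF in Eq.~\eqref{eq:6789} uses $i\geq k$), and recognize that the Beta-function normalization collapses every summand to the same value $1/n$. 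Once that cancellation is spotted, everything else is routine.
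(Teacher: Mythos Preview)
Your proof is correct, and it takes a cleaner route than the paper's. The paper substitutes the Shifted-Exp CDF and PDF into the density formula of Eq.~\eqref{eq:6789} and then evaluates $\int y\,f_{Y_{t,2}}(y)\,dy$ directly, explicitly omitting ``the tedious calculation details''. You instead bypass the density entirely by working with the survival function $P(Y_{t,2}>y)$, which factors by independence into $e^{-\lambda(y-s)}$ times the lower binomial tail $\sum_{i=0}^{k-1}\binom{n-1}{i}F(y)^i(1-F(y))^{n-1-i}$; the substitution $v=e^{-\lambda(y-s)}$ then reduces each summand to the Beta integral $\binom{n-1}{i}B(n-i,i+1)=1/n$, and the answer drops out immediately. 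This is both shorter and more transparent than grinding through the differentiated sum in Eq.~\eqref{eq:6789}. Your second route via the rank of $Z_t$ and the R\'enyi representation $E[Z_{(j)}]=\frac{1}{\lambda}\sum_{i=1}^{j}\frac{1}{n-i+1}$ is also correct (the key observation that the rank of $Z_t$ is independent of the order statistics, by exchangeability, is what justifies the conditioning), and the telescoping $\sum_{i=1}^{k}\frac{(k-i+1)+(n-k)}{n-i+1}=k$ is a nice check. Both of your arguments make the structural reason for the $\frac{k}{\lambda n}$ answer visible, whereas the paper's density computation treats it as a black-box integral.
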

\begin{proof}
    Because the random variable $\{X_i\}_{i=1}^n$ obeys the uniform distribution $(X_i\sim $ Shifted-Exp$(\lambda, s ), i =1,2,\ldots,n),$ 
    we can know that
    \begin{equation}
        F_{X_{t,0}}(y) = F_{X_{t}}(y) = 
        \begin{cases}
            &1 - e^{-\lambda (y-s)},  \hfill y \geq s \\
            &0 ,  \hfill y < s\\
        \end{cases};
    \end{equation}
    \begin{eqnarray*}
        f_{X_{t,0}}(y) = f_{X_t}(y) = 
        \begin{cases}
            &\lambda e^{-\lambda (y-s)}, \hfill y \geq s \\
            & 0,  \hfill y < s
        \end{cases}.
    \end{eqnarray*}
    Together the two equations above and Eq. \eqref{eq:6789}, we further get (Due to space limitation, we omit the tedious calculation details here)   
    $$E[Y_{t,2}] = \int_{-\infty}^{+\infty}yf_{Y_{t,2}}(y)dy=s+\frac{k}{\lambda n}.$$

\end{proof}
\begin{corollary}
    If the random variable $\{X_i\}_{i=1}^n$ obeys the Shifted-Exponential distribution $(X_i\sim $ Shifted-Exp$(\lambda, s ), i =1,2,\ldots,n),$ 
    for any $(n,k,m)$ MDS array code, our AAKL (or AAUL) algorithm can reduce 
    the expectation of the data access latency compared to the DA algorithm by the rate of
    $$\Gamma_{SE} = \frac{1}{s\lambda + 1}\cdot\frac{n-k}{n}.$$
\end{corollary}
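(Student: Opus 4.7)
The plan is to mirror the proof of Corollary \ref{coro5} almost verbatim, substituting the shifted-exponential mean for the uniform mean. Since Corollary \ref{lemma2} already supplies $E[Y_{t,2}] = s + k/(\lambda n)$, the only remaining ingredient is $E[Y_{t,1}]$, and then the reduction ratio is a one-line algebraic simplification.

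First, I would compute $E[Y_{t,1}]$. Since $Y_{t,1} = X_t$ and $X_t \sim$ Shifted-Exp$(\lambda, s)$, the CDF given in the proof of Corollary \ref{lemma2} yields $E[Y_{t,1}] = s + 1/\lambda$ by a direct integration (or by recalling that a shifted-exponential is just an Exp$(\lambda)$ translated by $s$, so its mean shifts by $s$). This is the only new computation required, and it is standard.

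Next, I would plug both expectations into the definition of the reduction ratio,
\begin{equation*}
\Gamma_{SE} = \frac{E[Y_{t,1}] - E[Y_{t,2}]}{E[Y_{t,1}]} = \frac{(s + 1/\lambda) - (s + k/(\lambda n))}{s + 1/\lambda} = \frac{(n-k)/(\lambda n)}{s + 1/\lambda}.
\end{equation*}
Multiplying numerator and denominator by $\lambda$ gives
\begin{equation*}
\Gamma_{SE} = \frac{n-k}{n(s\lambda + 1)} = \frac{1}{s\lambda + 1} \cdot \frac{n-k}{n},
\end{equation*}
which is the claimed identity.

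There is no real obstacle here; the corollary is essentially a bookkeeping exercise riding on the previous two results. The only place where one could slip is confirming $E[X_t] = s + 1/\lambda$ for the shifted-exponential, but this follows immediately from linearity of expectation applied to $X_t = s + Z$ with $Z \sim$ Exp$(\lambda)$. Worth noting at the end is the qualitative reading: unlike the uniform case where $\Gamma_U \to (1-c)^2$ as $n \to \infty$, here $\Gamma_{SE}$ scales linearly in $(1-c)$ but is damped by the factor $1/(s\lambda+1)$, so a large initial threshold $s$ (relative to the exponential tail $1/\lambda$) suppresses the benefit of the AAKL/AAUL algorithms.
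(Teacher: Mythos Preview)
Your proposal is correct and matches the paper's own proof essentially line for line: the paper also invokes Corollary~\ref{lemma2} for $E[Y_{t,2}] = s + k/(\lambda n)$, states $E[Y_{t,1}] = s + 1/\lambda$, and computes $\Gamma_{SE}$ as the ratio in one step. If anything, you supply more detail on the algebraic simplification and the justification of $E[X_t]$ than the paper does.
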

\begin{proof}
    From Corollary \ref{lemma2}, we have $E[Y_{t,1}] = s+\frac{1}{\lambda}, E[Y_{t,2}] = s+\frac{k}{\lambda n}$.
    Then we can obtain that 
    \begin{eqnarray*}
        \Gamma_{SE} &=& \frac{E[Y_{t,1}] - E[Y_{t,2}]}{E[Y_{t,1}]} = \frac{1}{s\lambda + 1}\cdot\frac{n-k}{n}.
    \end{eqnarray*}
\end{proof}
From Corollary 7, we can get
$$
    \Gamma_{SE} = \frac{1}{s\lambda + 1}\cdot\frac{n-k}{n} = \frac{1-c}{s\lambda + 1},
$$
which is related to the code rate $c$ of the MDS code and the parameters of distribution $s$ and $\lambda$. When the code rate $c$ is smaller, $\Gamma_{SE}$ is larger, then the benefit of using AAKL (AUKL) compared to the AD algorithm is greater, and it is a linear growth relationship.

\subsection{Worst Case Analysis}

For convenience, the example in Example 1 is used for the worst-case analysis. This example was selected because our findings indicate that access latency does not significantly depend on the specific values of $n$ and $k$. Instead, it is closely related to the code rate. 

Assume that $X_1$, $X_3$ follow uniform distribution (U[0, T]), and $X_2$ is a constant $T$, then $Y_{t,1} = T$, $Y_{t,2} = \min(T, \max(X_1, X_3)) = \max(X_1, X_3)$.
The PDF of $Y_{t,2}$ is
$$
f_{Y_{t,2}}(y) = 2\frac{y}{T^2}, \qquad y \in [0, T]
$$
The CDF of $Y_{t,2}$ is
$$
F_{Y_{t,2}}(y) = \frac{y^2}{T^2}, \qquad y \in [0, T].
$$
Calculating the expectation of the random variable $Y_{t,2}$ yields
\begin{eqnarray*}
E[Y_{t,2}] 
= \int^{T}_{0}yf_{Y_{t,2}}(y)dy 
= \frac{2}{3}T.
\end{eqnarray*}
At this time, 
\begin{eqnarray*}
    \Gamma_{U}^{Worst} = \frac{E[Y_{t,1}]-E[Y_{t,2}]}{E_{Y_{t,1}}} = \frac{T-\frac{2T}{3}}{T} = \frac{1}{3},
\end{eqnarray*}
which means that the data access latency of AAKL (AAUL) algorithm is reduced by $33.3\%$ compared with DA algorithm.

 Assume that $X_1$ and $X_3$ follow Shifted-Exponential distribution (Shifted-Exp $(\lambda, s)$), and $X_2$
 is a constant $T$, then $Y_{t,1} = T$, $Y_{t,2} = \min(T, \max(X_1, X_3))$. 
 Under a similar analysis, when $\lambda = \frac{1}{50}$, $$\Gamma_{U}^{Worst} = \frac{E[Y_{t,1}]-E[Y_{t,2}]}{E_{Y_{t,1}}} = 0.3708,$$ when $\lambda = \frac{1}{25}$, $$\Gamma_{U}^{Worst} = \frac{E[Y_{t,1}]-E[Y_{t,2}]}{E_{Y_{t,1}}} = 0.635.$$
The data access latency of the AAKL (AAUL) algorithm is reduced by $37.08\%$ $(63.5\%)$ compared with DA algorithm.

\section{Simulations}

In this section, we use the Monte Carlo simulation method to evaluate the data
access latency for our algorithms and the existing DA method.

For uniform distribution (or Shifted-Exponential distribution), we set the data access latency $X_i$ of $n$ nodes to obey the uniform distribution on $[0,100]$ (or Shifted-Exp$(\lambda = 0.02, s=1)$), where $n = 10, k = 6$. In the experiment, we compared the actual latency of the DA algorithm and AAKL (or AAUL) algorithm in data access. We conducted a total of 10,000 independent random experiments, and obtained the results shown in Fig. \ref{fig.2.1} (or Fig. \ref{fig.2.2}).
Fig. \ref{fig.2.1} (or Fig. \ref{fig.2.2}) shows the curve of the absolute value of the data access latency of AAKL (or AAUL) and DA algorithm as the number of experiments changes. For uniform distribution, the average data access latency for AAKL (or AAUL) is 40.78 and the average data access latency for DA algorithm is 49.70 where $$\frac{49.70-40.78}{49.70} = 0.1794 \approx (1-c)^2 = (1-0.6)^2.$$ For Shifted-Exponential distribution, the average data access latency for AAKL (or AAUL) is 40.78 and the average data access latency for DA algorithm is 49.70 where $$\frac{51.07-30.37}{51.07} = 0.405  \approx 1-c = 0.4.$$ According to the experimental results, these two experiments basically verified our theoretical conclusions.
From the two figures, we can see that our AAKL (or AAUL) algorithm has a significant advantage over the DA algorithm, whether it is uniform distribution or Shifted-Exponential distribution.

\begin{figure}[htbp]
    \centering
    \includegraphics[width=0.48\textwidth]{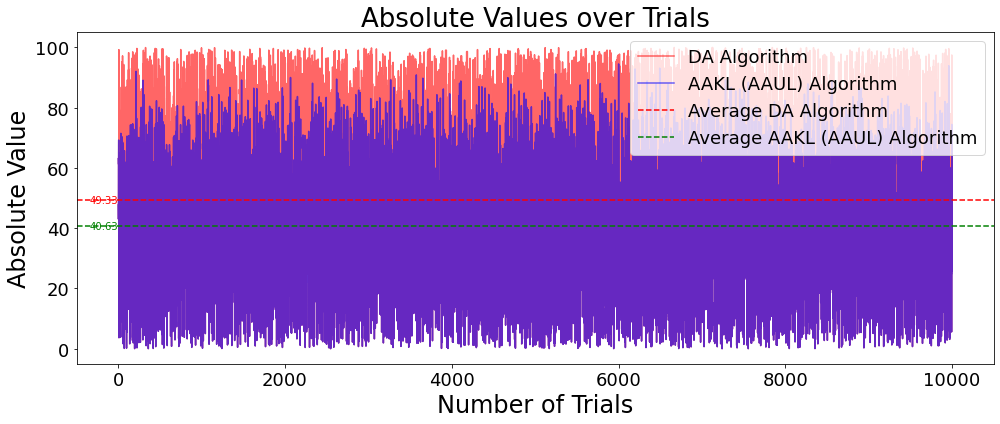}
    \caption{Data access latency for uniform distribution.}
    \label{fig.2.1}
\end{figure}

\begin{figure}[htbp]
    \centering
    \includegraphics[width=0.48\textwidth]{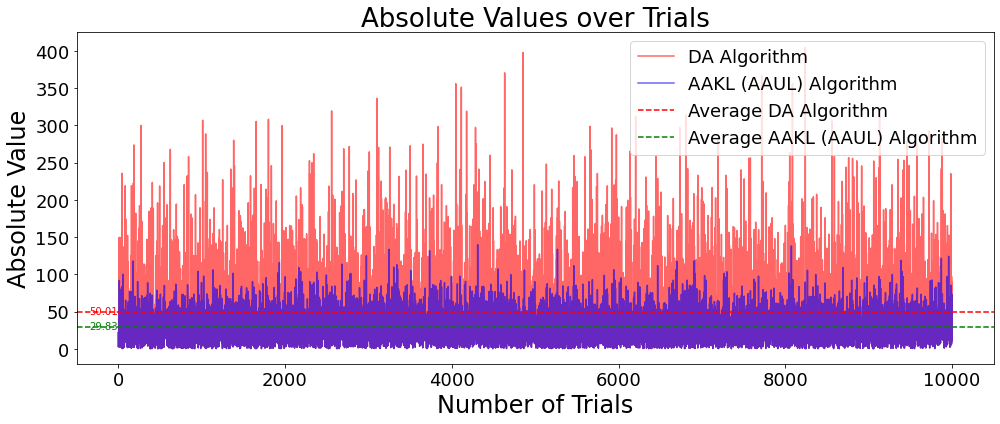}
    \caption{Data access latency for Shifted-Exponential distribution.}
    \label{fig.2.2}
\end{figure}

\section{Conclusion}
In this paper, we consider the problem of data download latency of a single node in a distributed storage system, 
and propose AAKL and AAUL algorithms based on MDS array codes. We show that our algorithms have less data access latency than the existing DA method. How to further reduce the data access latency for multiple nodes is one of our future work.

\ifCLASSOPTIONcaptionsoff
  \newpage
\fi

\bibliographystyle{IEEEtran}
\bibliography{main}

\end{document}